\Crefname{fig}{Figure}{Figures}
\newcommand{\group}{$5$-}
\newcommand{\nh}{nh}
\newcommand{\T}{\mathcal{T}}
\newcommand{\com}[1]{\ignorespaces} 
\newcommand{\etal}{{~et~al.}}
\newcommand\tabb[1][0.4cm]{\hspace*{#1}}
\newcommand{\deleted}[1]{}
\newtheorem{theorem}{Theorem}
\newtheorem*{lemma*}{Lemma}
\newtheorem{lemma}{Lemma}
\newtheorem{observation}{Observation}
\theoremstyle{definition}
\newtheorem{definition}{Definition}
\crefname{algo}{Algorithm}{algorithms}
\title{On selecting a fraction of leaves with disjoint neighborhoods in a plane tree\thanks{This work  was supported in part
by the Swiss National Science Foundation, project SNF 200021E-154387. It appears in \emph{Discrete Applied Mathematics} \cite{junginger2021combDAM}.}}
\author{Kolja Junginger, Ioannis Mantas, and Evanthia Papadopoulou\thanks{Corresponding author: evanthia.papadopoulou@usi.ch}}
\date{Faculty of Informatics, USI Università della Svizzera italiana, Lugano, Switzerland}
\begin{document}

\maketitle

\begin{abstract}
We present a generalization of a combinatorial result by Aggarwal,
Guibas, Saxe and Shor~[\textit{Discrete \& Computational Geometry},
1989] on a linear-time algorithm 
that selects a constant fraction of leaves,
with pairwise disjoint neighborhoods,
from a binary tree embedded in the plane.
This result of Aggarwal\etal{} is essential to
the linear-time framework, which they also introduced, that computes certain Voronoi
diagrams of points with a tree structure in linear time. 
An example is the diagram computed while updating the Voronoi diagram of points after
deletion of one site. 
Our generalization allows that only a fraction of
the tree leaves is considered,
and it is motivated by linear-time Voronoi
constructions for non-point sites.
We are given a plane tree $T$ of $n$
leaves, $m$ of which have been marked, and each marked leaf is associated
with a \emph{neighborhood} (a subtree of $T$) such that any two topologically
consecutive marked leaves have disjoint neighborhoods. 
We show how to select in linear time a constant fraction of the
marked leaves having pairwise disjoint neighborhoods. 
\end{abstract}

\section{Introduction}
\nocite{junginger2021combDAM}
In 1987, Aggarwal, Guibas, Saxe and Shor~\cite{AGSS89} introduced a
linear-time technique to compute the Voronoi diagram of points in
convex position, which can also be used to compute other
Voronoi diagrams of point-sites with a tree structure such as:
(1) updating a nearest-neighbor Voronoi diagram of points after deletion of one site;
(2) computing the farthest-point Voronoi diagram, after the convex hull of the points is known;
(3) computing an order-$k$ Voronoi diagram of points, given its order-($k\textup{-}1$) counterpart.
Since then, this framework has been used (and extended) in various
ways to tackle various linear-time Voronoi constructions, including the \emph{medial axis} of a simple polygon by
Chin\etal~\cite{snoyeink99}, 
the \emph{Hamiltonian abstract Voronoi diagram} by Klein and
Lingas~\cite{klein1994}, and 
some \emph{forest-like}
abstract Voronoi diagrams by Bohler\etal~\cite{bohler2018}.
The linear-time construction for problem~(3) improves by a logarithmic
factor the standard iterative construction by Lee~\cite{lee1982k} to compute
the order-k Voronoi diagram of
point-sites, which is in turn 
used in different scenarios; for example, algorithms for  \emph{coverage problems} in wireless
networks by So and Ye~\cite{so2005}.
A much simpler randomized linear-time approach for problems~(1)-(3)
was introduced by Chew~\cite{Chew90}.

\begin{figure}
	\centering
	\centering
	\includegraphics[width=.72\textwidth,page=1]{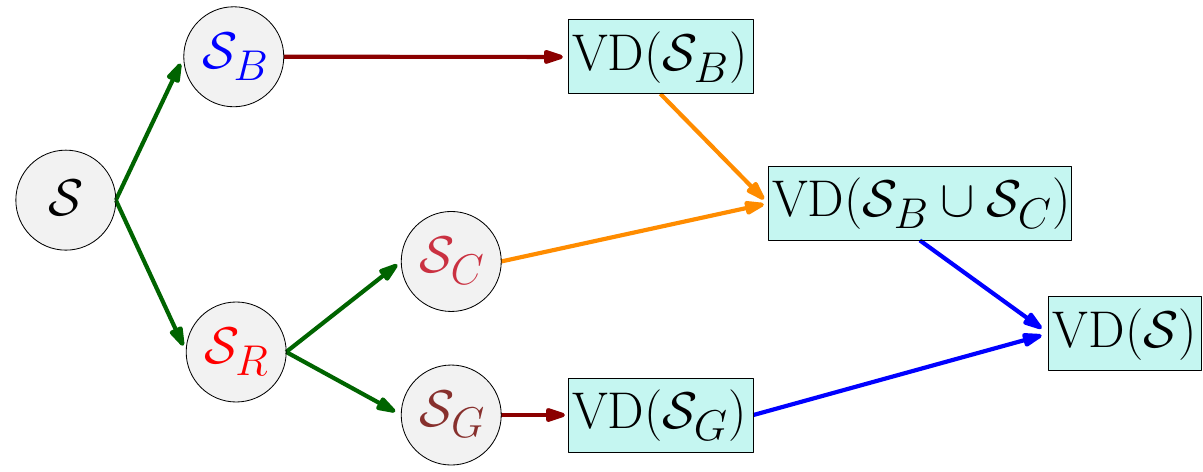}
	\caption{
  The divide \& conquer algorithm of \cite{AGSS89}.
		({\color{DarkGreen}\normalsize{$\longrightarrow$}}) indicates the two different divide phases;
		({\color{DarkRed}\normalsize{$\longrightarrow$}}) indicates the recursive constructions;
		({\color{DarkOrange}\normalsize{$\longrightarrow$}}) indicates site insertion; 
		({\color{DarkBlue}\normalsize{$\longrightarrow$}}) indicates merging. 
	}
	\label{fig:generalRecursion}
\end{figure} 

The linear-time technique of Aggarwal\etal~\cite{AGSS89} is a doubly-recursive
divide-and-conquer scheme
operating on an ordered set of points $S$ whose Voronoi diagram is a
tree with connected Voronoi regions.
At a high level it can be described as follows,
see \cref{fig:generalRecursion}.
In an initial divide phase, the set $S$ is split in 
two sets $\mathcal{S}_R$ (red) and $\mathcal{S}_B$ (blue) of roughly equal size, with
the property that
every two consecutive red sites in  $\mathcal{S}_R$
have disjoint Voronoi regions.
In a second  divide phase, the set $\mathcal{S}_R$ is  split further in sets $\mathcal{S}_C$
(crimson) and $\mathcal{S}_G$ (garnet), 
so that any two sites in $\mathcal{S}_C$ have pairwise disjoint
regions in the Voronoi diagram of $\mathcal{S}_B\cup \mathcal{S}_C$, 
and the cardinality of $\mathcal{S}_C$ is a constant fraction of the
cardinality of $\mathcal{S}_R$.
In the merge phase,  the sites of $\mathcal{S}_C$ are inserted one by one in the recursively
computed Voronoi diagram of $\mathcal{S}_B$, deriving the Voronoi
diagram of $\mathcal{S}_C\cup \mathcal{S}_B$, and the result is  merged with the
recursively computed diagram of $\mathcal{S}_G$.

The key factor in obtaining the  linear-time complexity is that the
cardinality of the set
$\mathcal{S}_C$
is a constant fraction of $\mathcal{S}_R$, which is $\Theta(\mathcal{|S|})$, and
that $\mathcal{S}_C$ can be obtained in linear time.
This is  
possible due to the following combinatorial result of \cite{AGSS89} on a
geometric binary tree embedded in the plane.
This result is, thus, inherently used by 
any algorithm that is based on 
the linear-time framework of Aggarwal\etal{}
A binary tree that contains no nodes of
degree 2 is called \emph{proper}.

\begin{theorem}[\cite{AGSS89}]
	\label{thrm:Aggarwal}
	Let $\T$ be an unrooted (proper) binary tree embedded in the plane.
	Each leaf of $\T$ is associated with a neighborhood, which is
	a (proper) subtree of $\T$ rooted at that leaf;  consecutive leaves in
	the topological ordering of $\T$ have disjoint neighborhoods.
	Then, there exists a fixed fraction of the leaves whose
	neighborhoods are pairwise disjoint, they have a constant size, and no tree edge has its endpoints in two different neighborhoods. 
	Such a set of leaves can be found in linear time. 
\end{theorem}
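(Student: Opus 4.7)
The plan is to proceed in three stages, tied together by a linear-time implementation.

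Stage 1: show that at least a constant fraction of the leaves $\ell$ satisfy $|N(\ell)| \leq c$ for some universal constant $c$. Calling leaves with $|N(\ell)| > c$ \emph{big}, the goal is to prove that the number of big leaves is $O(n/c)$. The consecutive-disjoint hypothesis, combined with the degree-$3$ structure of $\T$'s interior and the plane embedding, restricts how many big neighborhoods can extend into the same region of $\T$. I would attempt a charging argument: each big leaf $\ell$ is assigned a local witness inside $N(\ell)$ (for instance, the internal node obtained by walking a bounded number of steps into $N(\ell)$ from $\ell$), and I would show that each witness is charged only a bounded number of times---two big leaves sharing a witness too close to one another would become topologically consecutive and have intersecting neighborhoods, contradicting the hypothesis.

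Stage 2: among the $\Omega(n)$ small leaves, build the intersection graph $H$ whose vertices are the small leaves and whose edges connect pairs of leaves whose neighborhoods share at least one vertex of $\T$. Each small neighborhood has at most $c$ vertices; conversely, each vertex $v$ of $\T$ lies in only $O(1)$ small neighborhoods, because any leaf $\ell$ with $v \in N(\ell)$ and $|N(\ell)| \leq c$ is within tree-distance $c$ of $v$, and the bounded degree of $\T$ admits only $O(3^c) = O(1)$ such leaves. Hence $H$ has maximum degree $O(1)$, and a greedy independent-set sweep in topological leaf order selects $\Omega(n)$ leaves with pairwise disjoint neighborhoods in linear time. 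Stage 3: to enforce that no tree edge has its two endpoints in different selected neighborhoods, discard one leaf from every such violating pair; since each selected neighborhood has $O(c)$ boundary edges, only a constant fraction of the selection is lost, leaving $\Omega(n)$ leaves.

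All steps run in $O(n)$ time: neighborhood sizes can be computed (capped at $c{+}1$) in linear total work; $H$ has $O(n)$ edges by the degree bound; and the greedy selection plus edge cleanup are linear. The main obstacle is Stage 1---translating the purely pairwise consecutive-disjoint hypothesis into a global bound on the number of big leaves. The witness in the charging argument must be chosen carefully so that both the bounded branching of $\T$ and the cyclic order supplied by the plane embedding are exploited, in order to rule out configurations where many leaves could simultaneously sustain medium-sized neighborhoods.
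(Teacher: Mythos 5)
Your Stages 2 and 3 are fine as far as they go (a bounded-degree intersection graph among constant-size neighborhoods plus a greedy independent set is a perfectly good way to finish, and the edge condition can be folded into the same bounded-degree conflict graph), but everything rests on Stage 1, and Stage 1 is where the theorem actually lives --- and your version of it is not only unproven but false as you have quantified it. The claim that the number of leaves with $|N(\ell)|>c$ is $O(n/c)$ fails already for a caterpillar (comb): take a long internal path $v_1,\dots,v_k$ with one leaf $\ell_i$ hanging off each $v_i$ on the same side, let every even-indexed leaf have the trivial neighborhood $\{\ell_i\}$, and let every odd-indexed leaf have a neighborhood running a $\Theta(n)$-length stretch of the path. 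Consecutive leaves have disjoint neighborhoods (disjointness is only required for adjacent pairs in the cyclic order, and the big neighborhoods only overlap one another, which is allowed), yet half the leaves are ``big'' for every constant $c$. This also shows why your charging scheme cannot work: each big leaf can charge its own private witness a few steps into its neighborhood, so ``each witness is charged $O(1)$ times'' gives no bound below $\Theta(n)$. What is true --- and is the entire content of the theorem --- is the weaker statement that $\Omega(n)$ leaves have constant-size neighborhoods, but that cannot be obtained by bounding the number of big leaves; it has to be obtained by exhibiting, inside suitable constant-size local structures, a leaf whose neighborhood is trapped there.

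That is exactly what the argument of Aggarwal et al.\ (mirrored in this paper's Lemmas~\ref{lem:oneten}--\ref{lem:existence}) does, and it is the missing idea in your proposal: delete the leaves, classify the remaining nodes by how many leaves they were adjacent to ($L$-, $C$-, $J$-nodes), partition each spine of consecutive $C$-nodes into groups of five, and show by a counting argument on the resulting binary tree that every eight leftover $C$-nodes force an $L$-node, so these components capture a constant fraction of all leaves. Then, within each component, the disjointness of \emph{consecutive} neighborhoods is used pigeonhole-style (e.g.\ among three leaves on one side of a group of five $C$-nodes, if the middle one's neighborhood escapes through a delimiting node, it evicts a neighbor's neighborhood, which must then be confined) to produce one leaf per component whose neighborhood stays inside the component --- hence is of constant size, and the selected neighborhoods are automatically pairwise disjoint with no tree edge straddling two of them. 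Without a replacement for this confinement argument, your Stage 1 remains a genuine gap, as you yourself flag.
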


Overall, the time complexity of the algorithm is described by the following recursive equation and can be proved to be $\Theta(n)$, where $n = |\mathcal{S}|$.
\begin{align*}
	T(n) =& \ T(|\mathcal{S}_B|) + T(|\mathcal{S}_G|) + \Theta(|\mathcal{S}_R|) + |\mathcal{S}_C|\cdot \Theta(1) + \Theta(n)\\
	=& \  T(|\mathcal{S}_B|) + T(|\mathcal{S}_G|) + \Theta(n) \label{eq:combinatorial}\\
	=& \ \Theta(n) \tag{Because $|\mathcal{S}_C| = \Theta(n)$}
\end{align*}

It is worth understanding what \cref{thrm:Aggarwal} represents, in order to have a spherical perspective of its connection to Voronoi diagrams.
An embedded tree corresponds to the graph structure of a Voronoi diagram, and leaves are the endpoints of unbounded Voronoi edges \emph{"at infinity"}; see \cref{fig:combOriginal1}.
The neighborhood of a leaf corresponds to the part of the diagram (of $\mathcal{S}_B$) that gets deleted if a point-site is inserted there; see  \cref{fig:combOriginal2}.
Hence, \cref{thrm:Aggarwal} aims to select leaves with pairwise disjoint neighborhoods ($\mathcal{S}_C$), as they can easily, and independently from one another, be inserted in the diagram. 

\begin{figure}
	\centering
	\begin{subfigure}[t]{0.315\textwidth}
		\centering
		\includegraphics[width=\textwidth,page=1]{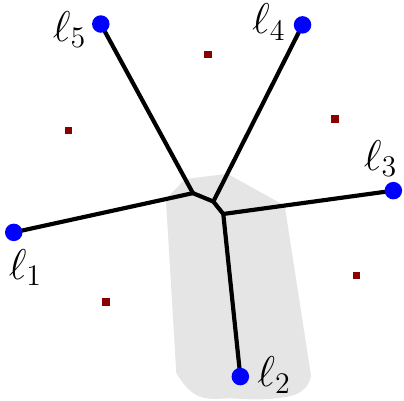}
		\caption{
   The Voronoi diagram of 5 points ({\large\color{DarkRed}$\sqbullet$});
		the neighborhood of $\ell_2$ is shaded. 
		}
		\label{fig:combOriginal1}
	\end{subfigure}
		\hfill	
	\begin{subfigure}[t]{0.315\textwidth}
		\centering
		\includegraphics[width=\textwidth,page=2]{combIntroNew}
		\caption{
   The dashed part of the diagram will get deleted if point ({\large\color{SeaGreen}$\sqbullet$}) is inserted.
		}
		\label{fig:combOriginal2}
	\end{subfigure}	
	\hfill
	\begin{subfigure}[t]{0.315\textwidth}
		\centering
		\includegraphics[width=\textwidth,page=3]{combIntroNew}
		\caption{
			The topological ordering of the leaves in $\T$.
		}
		\label{fig:combOriginal3}
	\end{subfigure}	
	\caption{
		An embedded binary tree $\T$ in the setting of Aggarwal\etal~\cite{AGSS89}.
	}
	\label{fig:combOriginal}
\end{figure} 

For generalized sites,  other than points in the plane, or for abstract
Voronoi diagrams, deterministic linear-time algorithms for the counterparts of
problems (1)-(3) have not been known so far. 
This includes the diagrams of very simple geometric sites such as line segments and circles in the
Euclidean plane.
A major complication over points is that the underlying diagrams  have disconnected
Voronoi regions.
Recently Papadopoulou\etal~\cite{junginger2018,khramtcova2017} presented  a randomized  linear-time
technique for these problems, based on a relaxed Voronoi
structure, called a \emph{Voronoi-like diagram}~\cite{junginger2018,khramtcova2017}.
Whether this structure can be used within 
the framework of Aggarwal\etal, leading  to deterministic linear-time
constructions, remains still an open problem.
Towards resolving this problem we  need a generalized version of
Theorem~\ref{thrm:Aggarwal}. 
   
The problem is formulated as follows. 
We have an unrooted binary tree $\T$ embedded in
the plane, which  corresponds to a Voronoi-like structure.
Not  all leaves of $\T$ are eligible 
for inclusion in the set ${S_C}$  of the linear-time framework. 
As in the original problem, each of the eligible leaves is associated with a neighborhood,
which is a subtree of $\T$ rooted at that leaf, and adjacent
leaves in the topological ordering of $\T$ have disjoint
neighborhoods.
In linear time, we need to compute a constant fraction
of the eligible leaves such that their neighborhoods are
pairwise disjoint.
The non-eligible leaves spread arbitrarily along
the topological ordering of the tree leaves.
This paper addresses this problem by proving the following
generalization of \cref{thrm:Aggarwal}. 

\begin{theorem} 
	\label{thrm:main}
	Let $\T$ be an unrooted (proper) binary tree embedded in the plane
	having  $n$ leaves, $m$ of which have been marked.
	Each marked leaf of $\T$ is associated with a neighborhood, which is a proper subtree of $\T$ rooted at this leaf, 
	and any two 
	consecutive marked leaves in the topological ordering of $\T$ have disjoint neighborhoods.
	Then, there exist at least $\frac{1}{10}m$ marked leaves
	whose neighborhoods are pairwise disjoint
	and no tree
	edge has its endpoints in two 
	of these neighborhoods.         
	Further, we can select at least a fraction $p$ of these $\frac{1}{10}m$ 
	marked leaves in time $O(\frac{1}{1-p}n)$, for any $p \in (0,1)$.
\end{theorem}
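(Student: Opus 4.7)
The plan is to reduce \cref{thrm:main} to \cref{thrm:Aggarwal} by contracting $\T$ to a proper plane binary tree $\T^*$ whose leaves are exactly the $m$ marked ones. I would take $\T^*$ to be the minimal subtree of $\T$ spanning the marked leaves, with every resulting degree-$2$ node suppressed. The circular order of marked leaves is preserved by the contraction, so $\T^*$ inherits a plane embedding and its topological leaf ordering coincides with the topological ordering of the marked leaves in $\T$. For every marked leaf $\ell$ I would define the induced neighborhood $N^*(\ell)$ as the trace of $N(\ell)$ on $\T^*$, which is a proper subtree of $\T^*$ rooted at $\ell$. Since $N(\ell_i)\cap N(\ell_{i+1})=\emptyset$ by hypothesis, $N^*(\ell_i)\cap N^*(\ell_{i+1})=\emptyset$ for topologically consecutive marked leaves, so $(\T^*,\{N^*(\ell)\})$ meets the hypotheses of \cref{thrm:Aggarwal}. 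Applying that theorem yields a fixed-fraction subset of marked leaves with pairwise disjoint $N^*$-neighborhoods; tracking the constant of \cite{AGSS89} gives the lower bound $\tfrac{1}{10}m$.

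The combinatorial crux is a lifting lemma asserting that disjointness of the induced neighborhoods in $\T^*$ implies disjointness of the original neighborhoods in $\T$. Every node of $\T\setminus\T^*$ lies inside a maximal \emph{unmarked branch} that attaches to $\T^*$ at a unique node, and any subtree of $\T$ rooted at a marked leaf that reaches into such a branch must contain the attachment node. Consequently, if two selected neighborhoods $N(\ell_i),N(\ell_j)$ shared a node outside $\T^*$, they would also share the corresponding attachment node in $\T^*$, contradicting disjointness of $N^*(\ell_i)$ and $N^*(\ell_j)$. An analogous case analysis transfers the ``no tree edge has its endpoints in two neighborhoods'' conclusion: every edge of $\T$ either has both endpoints inside $\T^*$ (so it is represented by an edge of $\T^*$ after suppression) or both endpoints lie in the same unmarked branch, and in either case the corresponding guarantee on $\T^*$ translates back.

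For the algorithmic part, $\T^*$ can be built in a single $O(n)$ depth-first traversal that flags a node of $\T$ as belonging to $\T^*$ whenever at least two of its adjacent subtrees contain marked leaves, followed by a linear-time suppression of degree-$2$ survivors. The selection algorithm of \cite{AGSS89} then runs on $\T^*$ in $O(m)$ time, giving a total of $O(n)$. To match the stated $O(n/(1-p))$ trade-off, I would phrase the algorithm as an iterative procedure in which each phase permanently commits a constant fraction of the still-live candidates and removes them together with their neighborhoods before recursing on the shrunken instance; an amortized charging of each phase's work to the candidates permanently discarded in it yields the claimed dependence on $p$, with the ``missing'' $1-p$ fraction corresponding to candidates the procedure has not yet committed to at termination.

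The main obstacle I anticipate is the bookkeeping needed to propagate the ``no tree edge with endpoints in two neighborhoods'' guarantee through the contraction, because $\T^*$ is obtained by both pruning entire unmarked branches and suppressing degree-$2$ nodes, and an original edge of $\T$ may sit between these two operations. A clean way to handle this is to record, with every edge of $\T^*$, the ordered list of original edges it represents, and to verify by case analysis that any induced neighborhood $N^*(\ell)$ containing an edge of $\T^*$ must contain all of its preimages in $\T$; once this technical lemma is in place, the rest of the proof is a direct application of \cref{thrm:Aggarwal} combined with the lifting lemma.
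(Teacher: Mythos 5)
Your reduction of the \emph{existence} half to \cref{thrm:Aggarwal} is essentially workable (and close in spirit to the paper, which also contracts away the unmarked leaves to form the tree $\T_u$ before carrying the $L$/$C$/$J$ labels back to $\T$); the lifting lemma needs more care than you give it, because the attachment node of an unmarked branch has degree $2$ in the spanning subtree and is therefore suppressed in $\T^*$, so "they would share the attachment node in $\T^*$" is literally false --- you must instead argue through the endpoints of the suppressed edge and invoke both vertex-disjointness \emph{and} the no-spanning-edge guarantee of \cref{thrm:Aggarwal} --- but this can be repaired by case analysis. The genuine gap is in the algorithmic half. You claim that, once $\T^*$ is built, "the selection algorithm of \cite{AGSS89} runs on $\T^*$ in $O(m)$ time." That algorithm must test, for constantly many candidate nodes per component, whether they lie in the induced neighborhood $N^*(\ell)$; but $N^*(\ell)$ is not an object you possess --- it is defined as the trace of $\nh(\ell)$, and the only access to $\nh(\ell)$ is by traversing it inside $\T$, where it may contain unmarked branches of total size $\Theta(r)=\Theta(n)$. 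Naively answering these membership queries costs up to $\Theta(n)$ per component, i.e.\ $\Theta(mn)$ overall, and nothing in your sketch shows how to compute all the traces within $O(n)$ total time (the neighborhoods of non-consecutive marked leaves may overlap heavily, so their total size is not linear). This is precisely the difficulty the paper identifies after \cref{lem:existence} and resolves with the budgeted traversal of \cref{algorithm}: each component's neighborhood is explored for at most $z=\lceil 10c/(1-p)\rceil-1$ steps with $c=\lceil r/m\rceil$, \cref{lem:intervalTotree} bounds confined neighborhoods by $O(\delta_K)$, and the pigeonhole \cref{lem:pigeon} bounds by $\frac{1-p}{10}m$ the number of components whose budget is exceeded --- which is exactly why the theorem only promises a fraction $p$ of the $\frac{1}{10}m$ leaves in $O(\frac{1}{1-p}n)$ time.

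A telling symptom is that your argument, if correct, would select \emph{all} $\frac{1}{10}m$ leaves in $O(n)$ time, making the parameter $p$ and the $O(\frac{1}{1-p}n)$ bound pointless; the trade-off exists precisely because unbounded unmarked branches inside neighborhoods prevent full verification in linear time. Your closing paragraph, which tries to recover the $p$-dependence by "phases that commit a constant fraction of the still-live candidates" with an amortized charging, does not identify what work is being charged to what, and in particular does not confront the per-query cost of exploring large unmarked branches; as stated it is not a proof step but a placeholder. To complete your approach you would need either (i) a preprocessing step computing all traces $N^*(\ell)$ in $O(n)$ total time, which you have not provided and which does not obviously exist in this access model, or (ii) the paper's mechanism of per-component step budgets combined with a counting argument (such as \cref{lem:pigeon}) controlling how many components are abandoned.
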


The algorithm of Theorem~\ref{thrm:main}
allows for a trade-off between the number of
the returned marked leaves and its time complexity, 
using a parameter $p\in (0,1)$.
If $p$ is constant then the algorithm returns 
a constant fraction of the marked leaves  in $O(n)$ time.
\cref{thrm:main} is a combinatorial result on an embedded tree, and thus, 
we expect it to find applications in different contexts as well.

\begin{figure}
	\centering
	\begin{subfigure}[t]{0.48\textwidth}
		\centering
		\includegraphics[width=\textwidth,page=1]{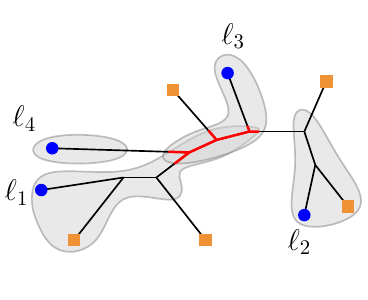}
		\caption{
			Neighborhoods of the marked leaves are shown shaded.
			The intersection of neighborhoods is highlighted with red.
	}
		\label{fig:combMarkedSetting}	
	\end{subfigure}
	\hfill
	\begin{subfigure}[t]{0.48\textwidth}
		\centering
		\includegraphics[width=\textwidth,page=3]{combTransform}
		\caption{
			A marked tree $\T$, which serves as an example instance for illustrating the notions of \cref{sec:prelims}.
		}
		\label{fig:combToriginal}
	\end{subfigure}
	\caption{
		Two marked trees, where marked leaves are shown with ({\large\color{Blue}$\bullet$}) and unmarked leaves are shown with ({\large\color{DarkOrange}$\sqbullet$}).
		}
\end{figure} 

\section{Preliminaries}\label{sec:prelims}
Throughout this work, we consider  an unrooted binary tree
$\T$ of  $n$ leaves that is embedded in the plane.
The tree $\T$ contains no nodes of degree 2 
and has the following additional properties: 
\begin{enumerate}[itemsep=-0.5ex, leftmargin=0.8cm]
	\item[$\bullet$]
	$m$ out of the $n$ leaves of $\T$ 
	have been \emph{marked}, and the remaining  $r=n{-}m$ leaves are \emph{unmarked}
	(see \cref{fig:combToriginal}).  
	\item[$\bullet$]
	Every marked leaf $\ell$ is associated with a \emph{neighborhood}, denoted $\nh(\ell)$, which is a 
	subtree of $\T$ rooted at $\ell$
	(see \cref{fig:combMarkedSetting}).  
	\item[$\bullet$]
	Every two consecutive marked leaves in the topological ordering of $\T$ have disjoint neighborhoods
	(see \cref{fig:combMarkedSetting}).  
\end{enumerate}

We call a binary tree $\T$ that follows these properties, a
\emph{marked tree}.
%
Given a marked tree $\T$, let $\T_u$ denote the 
\emph{unmarked tree} obtained
by deleting all the unmarked leaves of $\T$ 
and contracting the resulting degree-$2$
nodes, see \cref{fig:combTmarked}.
We apply to  $\T_u$  the following definition, which is extracted 
from the proof of \cref{thrm:Aggarwal} in~\cite{AGSS89}, see \cref{fig:combTtogether}.
 
\begin{definition} 
	\label{def:nodes}
	Let $T$ be a proper binary tree and let $T^*$ be the tree
	obtained from $T$ after deleting all its leaves.
	A node $u$ in $T^*$ is called:
	\begin{enumerate}[itemsep=-0.5ex,label=\alph*),leftmargin=0.8cm]
		\item 
		\emph{Leaf} or  \emph{$L$-node} if $deg(u)=1$ in
		$T^*$, i.e.,  
		$u$ neighbors two leaves in $T$.
		\item  
		\emph{Comb} or \emph{$C$-node} if $deg(u)=2$ in $T^*$, 
		i.e., $u$ neighbors one leaf in $T$. 
		\item  
		\emph{Junction} or \emph{$J$-node} if $deg(u)=3$ in
		$T^*$, i.e., 
		$u$ neighbors no leaves in $T$.
	\end{enumerate}
	A \emph{spine} is a maximal sequence of consecutive
	$C$-nodes, which is delimited by $J$- or $L$-nodes.
         Each spine has two \emph{sides} and marked
    leaves may lie in either side of a spine.
\end{definition}


\begin{figure}
	\centering
	\begin{subfigure}[t]{0.48\textwidth}
	\centering
	\includegraphics[width=.96\textwidth,page=4]{combTransform}
	\caption{
		Tree $\T_u$. 
	}
	\label{fig:combTmarked}	
	\end{subfigure}
	\hfill
	\begin{subfigure}[t]{0.48\textwidth}
	\centering
	\includegraphics[width=.96\textwidth,page=5]{combTransform}
	\caption{
		Tree $\T_u^*$:
		$L$-nodes are shown with ({\large$\square$}), $C$-nodes with ({\large$\circ$}), $J$-nodes with ({\large$\sqbullet$}), and spines are highlighted.
	}
	\label{fig:combTlabels}
	\end{subfigure}
	\caption{
		Illustration of \cref{def:nodes} applied to the tree $\T$ of \cref{fig:combToriginal}.
	}
    \label{fig:combTtogether}
\end{figure}

Let $\T_{u}^*$, be the tree obtained by applying \cref{def:nodes} to
the unmarked tree $\T_{u}$.
The nodes $\T_{u}^*$ are labeled as $L$-, $C$- and
$J$-nodes, see, e.g.,  \cref{fig:combTlabels}.
The labeling of nodes in $\T_{u}^*$ is then carried back to
their corresponding nodes in the original  marked tree $\T$ obtaining a \emph{marked tree $\T$ with
labels}, see~\cref{fig:combTcomponents}.
Some nodes in $\T$ remain unlabeled, see, e.g., node $u$  in
\cref{fig:combTcomponents}.

\begin{definition}
	\label{def:components}
	Given a marked tree $\T$ with labels we define the following two types of \emph{components}:
	\begin{enumerate}[label=\alph*),itemsep=-0.5ex,leftmargin=0.8cm]
		\item 
		\emph{$L$-component}: an $L$-node $\lambda$ defines an 
		$L$-component that consists of $\lambda$ union the two subtrees of $\T$
		that are incident to $\lambda$ and contain  no labeled node,
		see, e.g., $K_2$ in \cref{fig:combTcomponents}.
		The $L$-component contains exactly the two
		marked leaves that labeled $\lambda$.
		\item
		\emph{\group component}: 
		a group of five successive $C$-nodes 
		$c_i,\dots,c_{i+4}$ on a spine defines a
		\group component that consists of the path 
		$P_{c_i:c_{i+4}}$ from $c_i$ to  $c_{i+4}$
		(which may contain unlabeled nodes)
		union  the subtrees of $\T$, which are incident
		to the nodes of $P_{c_i:c_{i+4}}$ and contain no labeled node,
		see, e.g., $K_1$ in \cref{fig:combTcomponents}.
		Nodes $c_i$ and $c_{i+4}$ are referred to as the
		\emph{extreme nodes} of $K$.
		The  \group component contains exactly the five
		marked leaves, which labeled the five $C$-nodes. 
	\end{enumerate}
	Each spine is partitioned into consecutive  groups of \group
	components and at most four remaining 
	\emph{ungrouped $C$-nodes}. 
\end{definition}

\begin{figure}
	\centering
	\begin{minipage}[t]{0.48\textwidth}
		\centering
		\includegraphics[width=.96\textwidth,page=6]{combTransform}
		\caption{
		The marked tree $\T$ of \cref{fig:combToriginal} with labels ({\large$\circ$}, {\large$\square$}, {\large$\sqbullet$}).
		Node $u$ is not labeled.
		}
	\label{fig:combTlabelled}
	\end{minipage}
	\hfill
	\begin{minipage}[t]{0.48\textwidth}
		\centering
		\includegraphics[width=.96\textwidth,page=7]{combTransform}
		\caption{
		The components of $\T$ shown shaded. The dashed parts 
		do not belong to any component.
		}
	\label{fig:combTcomponents}
	\end{minipage}
\end{figure} 

\cref{fig:combTlabelled} and \cref{fig:combTcomponents} illustrates these definitions.
The tree $\T$ 
has three $L$-components and two \group components which 
are indicated shaded in  \cref{fig:combTcomponents}. 
The \group component $K_1$ contains the path $P_{c_1:c_5}$ from $c_1$
to $c_5$,  which is shown in thick black lines, and contains one unlabeled node.
Node $c_6$
is an ungrouped $C$-node.
\cref{fig:combTcomponents} also illustrates  a spine
consisting of the $C$-nodes $c_1,c_2,c_3,c_4,c_5,c_6$. 
The spine is
delimited by the $L$-node $\lambda'$ and the $J$-node~$\iota$;
it has five marked leaves from one side and one marked leaf from
the other.

\begin{observation}\label{obs:compDisjoint}
  The components of $\T$ are pairwise vertex disjoint.
  Every $L$-component contains exactly two marked leaves and every
  \group component contains exactly five marked leaves.
\end{observation}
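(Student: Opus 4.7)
The plan is to verify the two clauses of the observation separately, starting with the marked-leaf counts and then showing pairwise vertex-disjointness; only the second clause requires a real argument, the first being a direct unwinding of the definitions.

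For the counts, I would first note that by construction the leaves of $\T_u$ are exactly the marked leaves of $\T$ and the internal nodes of $\T_u$ are exactly the labeled nodes of $\T$. An $L$-node has degree $1$ in $\T_u^*$ and hence exactly two leaf neighbors in $\T_u$, i.e., two marked-leaf neighbors; these are precisely the leaves living in the two ``no labeled node'' subtrees collected by the $L$-component, since any marked leaf must be adjacent in $\T_u$ to a labeled node and so cannot hide deeper inside a dangling subtree. A $C$-node similarly contributes exactly one marked leaf via its unique leaf neighbor in $\T_u$, so a \group component contains exactly five.

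For the disjointness, the key observation is that each vertex of $\T$ lies in at most one component. Labeled nodes are immediate: an $L$-node appears only in its own $L$-component, a $J$-node appears in none, and a $C$-node appears in at most one \group component because \cref{def:components} partitions the consecutive $C$-nodes of each spine into disjoint groups of five (together with at most four ungrouped extras). The substantive case is an unlabeled vertex $w$. If $w$ lies in a dangling subtree of a labeled node $v$, then in $\T\setminus v$ the connected component containing $w$ has no labeled vertex, so every path in $\T$ from $w$ to any other labeled $v'$ must pass through $v$. Consequently $w$ cannot simultaneously lie in a dangling subtree of a different labeled $v'$: that would require a $w$-to-$v'$ path avoiding every labeled node except $v'$ itself, contradicting the fact that $v$ lies on it.

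The main obstacle I anticipate is the bookkeeping between $\T$, $\T_u$, and $\T_u^*$: one must check that the $L$/$C$/$J$ labels (defined in $\T_u^*$ by degree there) pull back correctly to $\T$, so that ``degree $k$ in $\T_u^*$'' matches ``$3-k$ marked-leaf neighbors in $\T_u$'' and corresponds cleanly to the actual dangling subtrees of the corresponding node in $\T$. Once this translation is in place, both clauses of the observation follow with no further combinatorial input.
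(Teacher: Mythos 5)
Your proposal is correct and matches the paper, which states this as an observation following directly from \cref{def:components} and gives no separate proof; your unwinding of the $\T$/$\T_u$/$\T_u^*$ correspondence is exactly the intended justification. The only loose end is that in a $5$-component an unlabeled vertex may also lie on the path $P_{c_i:c_{i+4}}$ itself, or in a subtree hanging off an \emph{unlabeled} node of that path, cases your ``dangling subtree of a labeled node'' argument does not literally cover; they follow by the same separation reasoning, since the group paths are pairwise disjoint segments of the spines and each no-labeled-node subtree attaches to $\T$ at a unique node of exactly one component skeleton.
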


Among the components of $\T$ there may be
subtrees of $\T$ consisting  of unlabeled nodes
and unmarked leaves that may be arbitrarily large. These subtrees hang off any unlabeled nodes
and ungrouped $C$-nodes. 
For example, in \cref{fig:combTcomponents}, node $u'$
is unlabeled and the gray dotted subtree incident to it
consists solely of unmarked leaves and unlabeled nodes that do not
belong to any component.
       
\section{Existence of leaves with pairwise disjoint
  neighborhoods}\label{sec:existence}

Aggarwal\etal~\cite{AGSS89} showed that for every eight ungrouped
$C$-nodes in $\T_u$ there exists at least one $L$-node.
Their argument 
holds for the marked tree $\T$ as well, which is described in the
following lemma for completeness.

\begin{lemma}
	\label{lem:oneten}  
	For every eight ungrouped $C$-nodes in $\T$ there exists at least one $L$-component.
\end{lemma}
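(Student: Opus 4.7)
The plan is to carry out the standard counting argument from~\cite{AGSS89} in the auxiliary tree $\T_u^*$, which inherits its labels to $\T$. Let $L$, $C$, $J$ be the number of $L$-, $C$-, and $J$-nodes in $\T_u^*$. Since $\T_u^*$ is a tree in which every node has degree $1$, $2$, or $3$, the handshake identity
\[
L + 2C + 3J \;=\; 2(L + C + J - 1)
\]
immediately gives $L = J + 2$. In particular, the number $L$ of $L$-nodes, and hence of $L$-components in $\T$ (by \cref{def:components}), is determined entirely by the number of $J$-nodes.

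Next I would bound the number of spines. Contract each maximal path of $C$-nodes in $\T_u^*$ to a single edge; the resulting tree $\T_u^\sharp$ has exactly $L + J = 2L - 2$ vertices and therefore $2L - 3$ edges. Every spine of $\T_u^*$ is contained in a unique such edge, so the number of spines is at most $2L - 3$.

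By \cref{def:components}, the $C$-nodes of each spine are partitioned into groups of five consecutive $C$-nodes (the \group components) plus at most four remaining \emph{ungrouped} $C$-nodes. Hence, writing $U$ for the total number of ungrouped $C$-nodes in $\T$,
\[
U \;\leq\; 4\,(2L - 3) \;=\; 8L - 12 \;<\; 8L.
\]
This inequality is exactly the desired ratio: the number of $L$-components is strictly greater than $U/8$, so for every eight ungrouped $C$-nodes there exists at least one $L$-component.

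I do not anticipate a serious obstacle, since the argument is a two-line consequence of the degree counting and tree-edge counting. The only points that need care are degenerate cases (for example, when $\T_u^*$ has no $J$-node at all, so $L = 2$ and there is a single spine with at most four ungrouped $C$-nodes, in which case the bound $U < 8L = 16$ still holds vacuously with room to spare), and checking that labels of nodes in $\T$ correspond precisely to those in $\T_u^*$, so that ``spine'', ``ungrouped $C$-node'', and ``$L$-component'' are counted consistently on both sides.
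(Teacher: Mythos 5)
Your proposal is correct and follows essentially the same route as the paper: both contract the $C$-node paths (degree-2 vertices) of $\T_u^*$ to obtain a proper binary tree with $2L-2$ nodes and $2L-3$ edges, bound the number of spines by the number of edges, and use the at-most-four ungrouped $C$-nodes per spine to get $U \leq 4(2L-3) < 8L$. Your explicit handshake derivation of $L = J+2$ is just a slightly different way of stating the standard leaf/internal-node count for the contracted proper binary tree, so there is no substantive difference.
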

      
\begin{proof}
	We count the $L$-nodes of $\T$ using the tree $\T_u^*$
	following the argument of \cite{AGSS89}.
	Let $k$ be the number of leaves in $\T_u^*$, which also equals
	the number of $L$-nodes in $\T$.
	Contracting all degree-2 vertices in $\T_u^*$  yields a binary
    tree $\T_{b}^*$, which has the same leaves as $\T_u^*$.
	Since $\T_{b}^*$ is an unrooted binary tree with $k$
	leaves, it has $2k-2$ nodes and  $2 k -3$ edges.
	Every edge in $\T_{b}^*$ corresponds to at most one spine in $\T_u^*$ and in every spine there are at most four ungrouped $C$-nodes. 
	Thus, 
	\[		
		|\text{ungrouped } C\text{-nodes}| \leq 4 |\text{spines}| \leq 4 \cdot (2 k-3) < 8 |L\text{-nodes}|,
	\]
	where $|\cdot|$ denotes cardinality.
	So, there exists at least one $L$-node for every eight
	ungrouped $C$-nodes, and an $L$-node corresponds to exactly
	one $L$-component.  
\end{proof}

The following lemmata establish that there exists a  constant
fraction of the marked leaves, which have pairwise disjoint neighborhoods.
The counting arguments follow those in~\cite{AGSS89} while they
are further enhanced to account for 
the unmarked leaves,
which are arbitrarily distributed among the marked leaves.
We say that the  neighborhood $\nh(\ell)$ of a marked leaf $\ell$ is
\emph{confined to a component} $K$ if it is a subtree of $K$.

\begin{figure}
	\centering
	\begin{subfigure}[t]{0.48\textwidth}
		\centering
		\includegraphics[width=\textwidth,page=2]{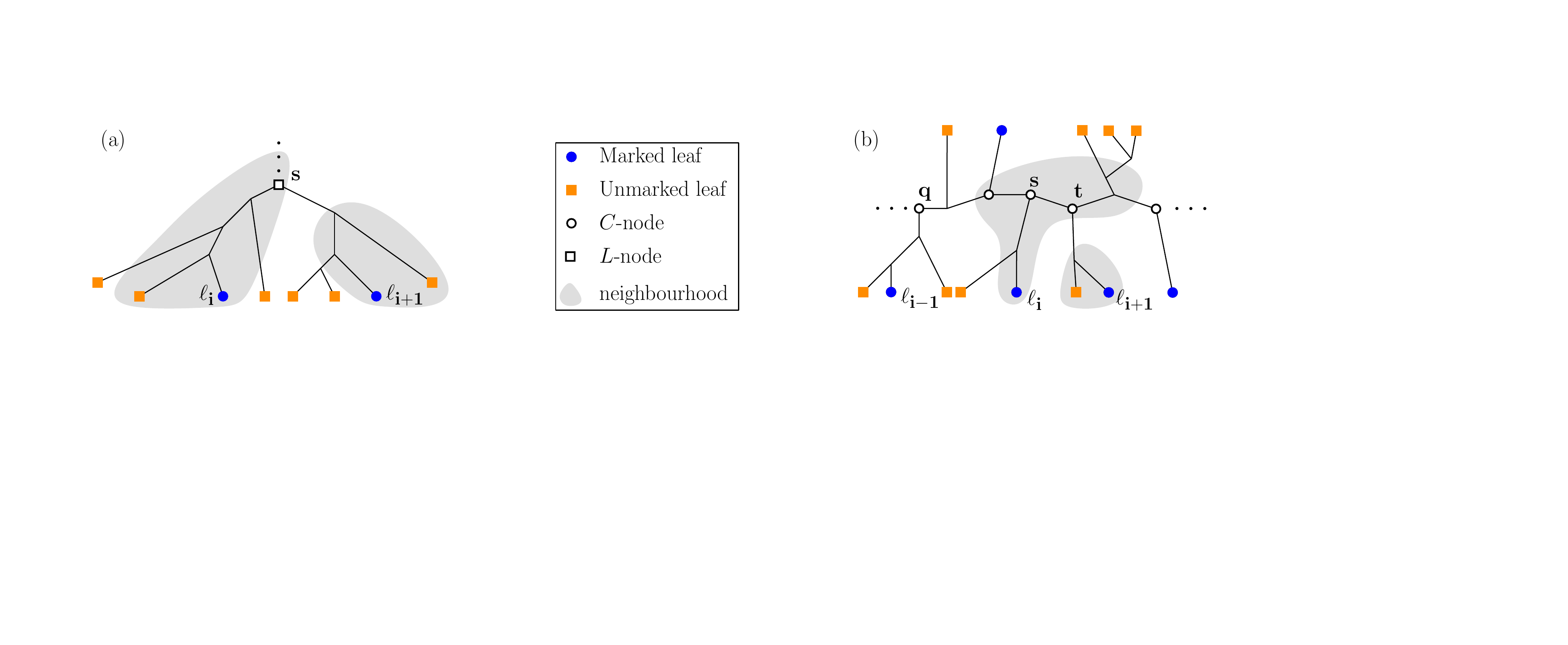}
		\caption{An $L$-component}
		\label{fig:combCasesLeaf}
	\end{subfigure}
	\hfill	
	\begin{subfigure}[t]{0.48\textwidth}
		\centering
		\includegraphics[width=\textwidth,page=3]{combProofCases}
		\caption{A \group component }
		\label{fig:combCasesGroup}
	\end{subfigure}	
	\caption{
		Marked leaves with their neighborhoods shaded.
		The neighborhood ${nh}(\ell_i)$ 
		is confined to the component in both cases.
	}
	\label{fig:combCases}
\end{figure} 

\begin{lemma}
	\label{lem:oneMarkedComp}
	In every component $K$, there exists a marked leaf $\ell \in
	K$ whose neighborhood is confined to $K$.
	This neighborhood may contain no $L$-node and no extreme
	$C$-node.
\end{lemma}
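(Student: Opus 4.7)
The plan is a case analysis on the type of $K$. In each case, the disjointness of the neighborhoods of consecutive marked leaves forces one neighborhood to collapse into a labeled-node-free subtree of $K$.

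First I would handle the $L$-component case. Let $K$ be an $L$-component with $L$-node $\lambda$ and its two associated marked leaves $\ell_1, \ell_2$, which by construction of $K$ are consecutive marked leaves of $\T$. The two subtrees of $K$ incident to $\lambda$ are the branches holding $\ell_1$ and $\ell_2$, and by definition each contains no labeled node. Any $\T$-path between these branches goes through $\lambda$, so the hypothesis $\nh(\ell_1) \cap \nh(\ell_2) = \emptyset$ implies that $\lambda$ lies in at most one of the two neighborhoods. Picking $\ell_i$ so that $\lambda \notin \nh(\ell_i)$, the connected subtree $\nh(\ell_i)$ rooted at $\ell_i$ is trapped in the branch of $\ell_i$, which is a labeled-node-free subtree of $K$, so $\ell_i$ is the required leaf.

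The core case is the \group component. Let $K$ have spine $c_1{-}c_2{-}c_3{-}c_4{-}c_5$ with marked leaves $\ell_1,\dots,\ell_5$ branching off $c_1,\dots,c_5$; these are consecutive marked leaves. I would start with the middle leaf $\ell_3$. If $\nh(\ell_3) \subseteq K$ and $c_1, c_5 \notin \nh(\ell_3)$, then since the only labeled nodes of $K$ are the $C$-nodes $c_1,\dots,c_5$, the neighborhood $\nh(\ell_3)$ contains no $L$-node and no extreme $C$-node, and $\ell_3$ works. Otherwise, I would argue that $c_1 \in \nh(\ell_3)$ or $c_5 \in \nh(\ell_3)$: the only exits from $K$ pass through a spine end, because every non-spine subtree of $K$ is labeled-node-free and dead-ends inside $K$. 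Assume without loss of generality that $c_1 \in \nh(\ell_3)$. By connectivity, the unique $\T$-path from $\ell_3$ to $c_1$ — which traverses $c_3, c_2, c_1$ — lies in $\nh(\ell_3)$, so in particular $c_2 \in \nh(\ell_3)$. Disjointness then forces $c_2 \notin \nh(\ell_2)$, so $\nh(\ell_2)$ is confined to the component of $\T \setminus \{c_2\}$ containing $\ell_2$, namely the labeled-node-free branch of $\ell_2$ off $c_2$, which is a subtree of $K$. Hence $\ell_2$ works; the symmetric case $c_5 \in \nh(\ell_3)$ selects $\ell_4$.

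The main obstacle is the structural step that, if $\nh(\ell_3)$ is not already confined, it must contain one of the extreme $C$-nodes $c_1, c_5$. This reduces to the observation that every non-spine subtree inside a \group component is labeled-node-free and has no exit to the rest of $\T$, so that the only gates out of $K$ are $c_1$ and $c_5$. Once this is pinned down, the disjointness hypothesis immediately squeezes a neighbor's neighborhood into its own branch, finishing the proof.
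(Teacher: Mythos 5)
Your $L$-component argument coincides with the paper's, and your structural observation that the only gates out of a \group component are its extreme $C$-nodes is sound. However, the \group component case has a genuine gap: you assume that the five marked leaves $\ell_1,\dots,\ell_5$, taken in spine order, are \emph{consecutive} marked leaves in the topological ordering of $\T$, and your key step ``disjointness then forces $c_2\notin \nh(\ell_2)$'' applies the hypothesis to the pair $\ell_2,\ell_3$. This is not justified: as noted in \cref{def:nodes}, marked leaves may hang off \emph{either side} of a spine, and two leaves attached to adjacent $C$-nodes but lying on opposite sides are separated, in the topological ordering, by all the marked leaves hanging beyond one end of the spine. For such a pair the hypothesis gives no disjointness at all, so $\nh(\ell_2)$ and $\nh(\ell_3)$ may overlap (both may contain, say, the spine edge $c_2c_3$), and then $\nh(\ell_2)$ need not be trapped in its branch off $c_2$; it may even leave $K$ through $c_1$. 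The same objection applies to the symmetric case with $\ell_4$.

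The missing ingredient is exactly the paper's pigeonhole on the two sides of the spine: among the five marked leaves of $K$, at least three lie on the same side, and \emph{those} three are pairwise consecutive in the topological ordering (the leaves interleaved along that side of the spine are all unmarked), so the disjointness hypothesis does apply to them. The paper then runs a case analysis of the same flavor as yours, but on the middle leaf $\ell_i$ of this same-side triple with the $C$-nodes $q,s,t$ of the triple: if $q$ or $t$ lies in $\nh(\ell_i)$, the corresponding same-side neighbor's neighborhood is confined; otherwise $\nh(\ell_i)$ itself is confined and avoids the extreme $C$-nodes. If you replace ``the middle leaf of the spine'' by ``the middle leaf of three same-side marked leaves'' and invoke disjointness only for same-side neighbors, your argument essentially becomes the paper's proof.
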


\begin{proof}
Let $K$ be an $L$-component and let $s$ be the
$L$-node that defines $K$.
Let $\ell_i$ and
$\ell_{i+1}$ be the two marked leaves of $K$.
Since  the neighborhoods $\nh(\ell_i)$ and $\nh(\ell_{i+1})$ are
disjoint,
at least one of  them 
cannot contain $s$.
This neighborhood is, thus, entirely contained in the relevant subtree
rooted at $s$, see \cref{fig:combCasesLeaf}, and contains no labeled node.

Let $K$ be a \group component. 
Since a \group component  has two sides, at least three out of the
five marked leaves of the component must lie on the
same side of $K$,
call them $\ell_{i-1},\ell_{i}$ and $\ell_{i+1}$.
Let $q,s$, and $t$ be their corresponding $C$-nodes,
i.e., the first $C$-nodes in $K$
reachable from
$\ell_{i-1},\ell_{i}$, and $\ell_{i+1}$, respectively,
see \cref{fig:combCasesGroup}.              
There are three cases.
If $t \in \nh(\ell_i)$, then $t \notin
\nh(\ell_{i+1})$ (since the two neighborhoods are
disjoint), and thus, $\nh(\ell_{i+1})$ is
confined to the subtree of $t$ that contains $\ell_{i+1}$. 
Similarly, if $q \in \nh(\ell_i)$, then 
$q \notin \nh(\ell_{i-1})$, so
$\nh(\ell_{i-1})$ is confined to the subtree of $q$ containing $\ell_{i-1}$.
If neither $q$ nor $t$ are in $nh(\ell_i)$,
then clearly $\nh(\ell_{i})$ is  confined to $K$.
In all cases the confined neighborhood cannot contain
neither $q$ nor $t$.
So, at least one of the five marked leaves must have a
neighborhood confined to $K$
and this neighborhood
cannot contain the extreme 
$C$-nodes in $K$.
\end{proof}

\begin{lemma}
	\label{lem:existence}
	Let $\T$ be a marked tree with $m$ marked leaves.
	At least $\frac{1}{10}m$ marked leaves 
	must
	have pairwise disjoint neighborhoods 
	such that 
	no tree edge 
	may have its endpoints in 
	two different neighborhoods.
\end{lemma}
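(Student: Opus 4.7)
The plan is to apply \cref{lem:oneMarkedComp} to each component of $\T$ and collect the resulting marked leaves into a set $S$: from every $L$-component and every \group component I would extract one marked leaf whose neighborhood is confined to that component and avoids the boundary labeled nodes (the $L$-node, or the two extreme $C$-nodes, respectively).

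Pairwise disjointness of the neighborhoods $\{\nh(\ell) : \ell \in S\}$ would be immediate from \cref{obs:compDisjoint}: each selected neighborhood lies inside its own component and distinct components are vertex-disjoint. The \emph{no shared edge} property would then hinge on the boundary-exclusion clause of \cref{lem:oneMarkedComp}: the only way an edge can leave a component is through its boundary labeled nodes, which the confined neighborhoods do not contain; by vertex-disjointness, those boundary nodes belong to no other component and hence to no other selected neighborhood, so no edge can have its two endpoints in two different selected neighborhoods.

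For the cardinality, I would parametrize $m$ by the labeled nodes of $\T$. Let $L$, $G$, $U$ denote the numbers of $L$-components, \group components, and ungrouped $C$-nodes, respectively. Since each $L$-node is associated with exactly two marked leaves, each $C$-node with exactly one, and each $J$-node with none, one has $m = 2L + 5G + U$ while $|S| = L + G$. Plugging in the bound $U < 8L$ from \cref{lem:oneten} then yields $10|S| - m = 8L + 5G - U > 0$, whence $|S| \geq m/10$.

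I do not anticipate a real conceptual obstacle here: the result reduces to a short counting manipulation built on \cref{obs:compDisjoint} and the previous two lemmas. The subtlety most deserving care is the edge condition---verifying that excluding the $L$-node and the two extreme $C$-nodes from a confined neighborhood really suffices to rule out any tree edge having its endpoints in two different selected neighborhoods. This in turn hinges crucially on the fact that distinct components of $\T$ share no vertices, so that the ``escape route'' from every selected neighborhood passes through a node outside every other one.
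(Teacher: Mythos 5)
Your proposal is correct and follows essentially the same route as the paper: one confined leaf per component via \cref{lem:oneMarkedComp}, pairwise disjointness and the edge condition from component vertex-disjointness plus the exclusion of $L$-nodes and extreme $C$-nodes, and \cref{lem:oneten} for the count. Your algebraic bookkeeping ($m = 2L+5G+U$, $|S| = L+G$, $U \leq 8L$) is just a more explicit rendering of the paper's ratio argument ($1/5$ per \group{}component, $1/10$ over $L$-nodes and ungrouped $C$-nodes).
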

\begin{proof}

	Every spine of $\T$ has up to four ungrouped $C$-nodes. 
	By \cref{lem:oneten}, there exists at least one $L$-component for every eight ungrouped $C$-nodes.
	By \cref{lem:oneMarkedComp}, every component of $\T$ 
	has at least one marked leaf whose neighborhood is confined to
	the component.
	So, overall, at least $\frac{1}{5}$ of the marked leaves from
	each \group component
	and at least $\frac{1}{10}$
	marked leaves of the remaining nodes,
	which label ungrouped $C$-nodes or $L$-nodes,
	have a confined neighborhood.
	The components are pairwise disjoint, so at least
	 $\frac{1}{10}$ marked leaves have pairwise disjoint neighborhoods.
	Furthermore, confined neighborhoods do not contain any
	$L$-node or extreme $C$-node, as shown in \cref{lem:oneMarkedComp}.
	Thus, no tree edge may have its endpoints in two
	different neighborhoods.
\end{proof}

We remark that the neighborhoods implied by
\cref{lem:existence} may not
contain any $L$-node nor any extreme $C$-node.
We also remark that these neighborhoods 
need not be of constant complexity
as their counterparts in
\cite{AGSS89} are. 
These neighborhoods may have complexity  $\Theta(r)$, where $r=n-m$ is
the number of unmarked leaves. 
Since  $r$ may be $\Theta(n)$, this poses a  challenge 
on how we can select these leaves 
efficiently.

\section{Selecting leaves with pairwise disjoint neighborhoods}
\label{sec:alg}
Given a  marked tree $\T$ with $m$ marked leaves, we have already
established the existence of  $\frac{1}{10}m$  marked leaves
that have pairwise disjoint neighborhoods.
In this section, we present an algorithm to select a fraction $p$ of
these leaves, i.e., $\frac{p}{10}m$ 
marked leaves with pairwise disjoint neighborhoods, in time
$O(\frac{1}{1-p}n)$, where $0<p<1$.

The main challenge over the algorithm of \cite{AGSS89} is
that the $r$ unmarked leaves are arbitrarily distributed among
the $m$  marked leaves,
and thus, the components of $\T$ and the neighborhoods of 
the marked leaves
may have complexity $\Theta(r)$.
If for each component we spend time
proportional to its size,  then 
the time complexity of the algorithm will be $\Theta(mr)$, 
i.e., $\Theta(n^2)$ if  $r,m\in \Theta(n)$.

To keep the complexity of the algorithm linear, 
we spend time up to a predefined number of steps in each
component depending
on the ratio  $c = \left\lceil \frac{r}{m} \right\rceil$
and the trade-off parameter $p\in (0,1)$. 
Our algorithm guarantees to find at least a fraction $p$ 
of the possible $\frac{1}{10}m$ marked leaves in time $O(\frac{1}{1-p}n)$. 
We first present a series of results 
necessary to establish the correctness of the approach 
and then  describe the algorithm.

Let $\ell_1, \dots, \ell_{m}$  be the marked leaves in $\T$
ordered in a counterclockwise topological ordering. 
Let the \emph{interval} $(\ell_i, \ell_{i+1})$ denote
the set of unmarked leaves between $\ell_i$ and
$\ell_{i+1}$ in the same order.
The \emph{interval tree} of  $(\ell_i,\ell_{i+1})$, denoted  $T_{(\ell_i,\ell_{i+1})}$,
is the minimal subtree of $\T$ 
that contains the marked leaves $\ell_i$ and $\ell_{i+1}$, including the unmarked
leaves in $(\ell_i,\ell_{i+1})$,
see \cref{fig:combIntervals2}.
We show the following \emph{pigeonhole lemma} involving unmarked
leaves and intervals. 

\begin{lemma}%
	\label{lem:pigeon}
	Suppose that  $r$ items (unmarked leaves) are distributed 
	in  $k\geq m$ containers (intervals), and let 
	$c = \left\lceil \frac{r}{m} \right\rceil$.
	For any natural number $x \leq r$, let $k_x$
        denote the
	number of containers that contain more than 
        $x$	items. 
	Then
	$k_x \leq \displaystyle \frac{cm}{x +1}$.
\end{lemma}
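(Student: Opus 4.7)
The plan is to prove this by a direct counting argument, which is essentially a weighted pigeonhole. The statement is really a two-line observation, so the "proof sketch" is just laying out the single inequality and chaining it through the ceiling bound.

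First I would fix $x$ and focus on the $k_x$ containers that contain strictly more than $x$ items; by definition, each of these containers has at least $x+1$ items. Summing over just these $k_x$ containers, the total number of items lying inside them is at least $k_x(x+1)$. Since the overall total number of items is $r$ (summing over all $k$ containers, which is at least as large as summing over the distinguished $k_x$ subset), we immediately get
\[
k_x (x+1) \;\le\; r.
\]

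Next I would convert the bound from $r$ to $cm$ using the definition $c = \lceil r/m \rceil$, which gives $r \le cm$. Dividing the previous inequality by $x+1$ (which is positive since $x$ is a natural number) then yields
\[
k_x \;\le\; \frac{r}{x+1} \;\le\; \frac{cm}{x+1},
\]
which is exactly the claim. The hypothesis $k \ge m$ plays no role in the counting itself; it is only a consistency assumption from the context (at least one unmarked leaf per interval is allowed but not required), and $x \le r$ is only needed to ensure the statement is nontrivial.

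There is no real obstacle here: the lemma is pure bookkeeping, and the only thing worth being careful about is the strict-versus-nonstrict inequality, i.e., "more than $x$" means "at least $x+1$", which is what gives the $+1$ in the denominator. The lemma will later be applied with a carefully chosen threshold $x$ (depending on $c$ and the trade-off parameter $p$) to control how many intervals can be "expensive" to process, so the usefulness of the bound lies not in its proof but in how it is invoked.
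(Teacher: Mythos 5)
Your proposal is correct and follows exactly the paper's argument: each of the $k_x$ containers holds at least $x+1$ items, so $k_x(x+1)\le r$, and combining with $r\le cm$ (from $c=\left\lceil r/m\right\rceil$) gives the bound. Your side remarks on the roles of $k\ge m$ and $x\le r$ are accurate and do not change the proof.
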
		
\begin{proof}\let\qed\relax
	Each of the $k_x$ containers
	contains at least $x+1$ items. Thus,
	\begin{align}
		&
		k_x(x+1)  \leq r \  \Rightarrow  \ 
		k_x \ \leq \ \frac{r}{x +1}.  
		\label{eq1} \\
		&
		c \ = \ \displaystyle \left\lceil \frac{r}{m} \right\rceil \  \Rightarrow  \ 
		c \geq \ \frac{r}{m}  \ \Rightarrow \  
		r \ \leq \ cm
		\label{eq2}\\
		&
		(\ref{eq1}) \stackrel{(\ref{eq2})}{\Longrightarrow}\ k_x \ \leq \ \frac{cm}{x +1} 
		\label{eq3}
	\end{align}\qedhere
\end{proof}

\begin{figure}
	\centering
	\begin{subfigure}[t]{0.325\textwidth}
		\centering
		\includegraphics[width=\textwidth,page=1]{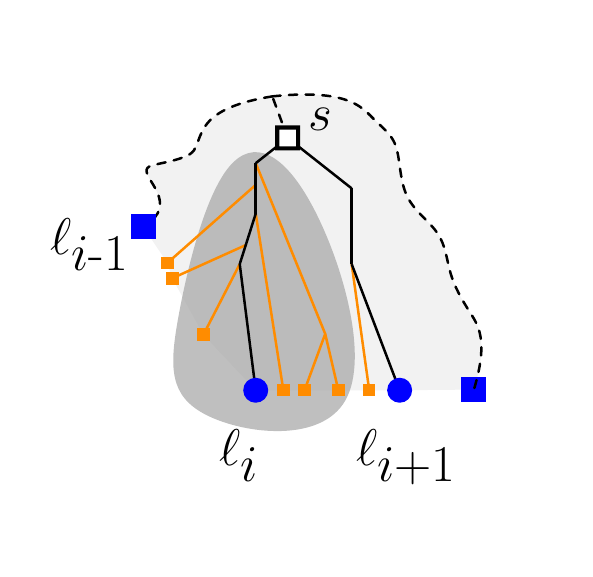}
		\caption{
			An $L$-component.
		}
		\hfill	
		\label{fig:combIntervals1}
	\end{subfigure}
	\begin{subfigure}[t]{0.325\textwidth}
		\centering
		\includegraphics[width=\textwidth,page=2]{combIntervals}
		\caption{\
			A \group component with $\ell_i=\ell_a$. 
			The interval tree {\color{DarkGreen}$T_{(\ell_j,\ell_{j+1})}$} is highlighted.
		}
		\label{fig:combIntervals2}
	\end{subfigure}	
	\hfill
	\begin{subfigure}[t]{0.325\textwidth}
		\centering
		\includegraphics[width=\textwidth,page=3]{combIntervals}
		\caption{
			A \group component with $\ell_i \notin \{\ell_a,\ell_b\}$.
			The interval trees {\color{DarkGreen}$T_{(\ell_{a},\ell_a^*)}$} and {\color{DarkGreen}$T_{(\ell_b,\ell_b^*)}$} are highlighted
		}
		\label{fig:combIntervals3}
	\end{subfigure}	
	\caption{
		Illustration of a component $K$ in different settings for the proof of \cref{lem:intervalTotree}. 
		The neighborhood $nh(\ell_i)$ is shaded gray.
		Marked leaves of $K$ are indicated with ({\color{Blue}$\bullet$})  and the other marked leaves with ({\color{Blue}$\sqbullet$}).
	}
	\label{fig:combIntervals}
\end{figure}

For a component $K$, let $\delta_K$ denote the maximum number of topologically consecutive
	unmarked leaves in $K$.	 The unmarked leaves counted in
        $\delta_K$ belong to some interval $(\ell_i,\ell_{i+1})$.

\begin{lemma}	\label{lem:intervalTotree}
	Let $K$ be a component of $\T$ 
	and let  $\ell_i$ be a marked leaf whose neighborhood
        $\nh(\ell_i)$ is confined to $K$. 
	\begin{enumerate}[label=\alph*),itemsep=-0.5ex, leftmargin=0.8cm]
	\item 
		If $K$ is an $L$-component, then  $\nh(\ell_i)$ has at most $4\delta_{K}$ nodes.  		    
	\item 
		If $K$ is a \group component, then $\nh(\ell_i)$ has at most $10\delta_{K}$ nodes.
	\end{enumerate}
\end{lemma}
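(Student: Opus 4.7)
The plan is to pin down where a confined neighborhood can reach within $K$ by re-running the case analysis from the proof of \cref{lem:oneMarkedComp}, then bound the leaves in that sub-region using the topological runs of unmarked leaves (each of size at most $\delta_K$), and finally convert the leaf count to a node count via the standard $\text{nodes}\approx 2\cdot\text{leaves}$ relation for a binary subtree.

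For part (a), I would first show that a confined $\nh(\ell_i)$ cannot contain the $L$-node $\lambda$: if it did, the other marked leaf $\ell'$ of $K$ (which is consecutive with $\ell_i$ in the marked ordering of $\T$) would have to keep $\nh(\ell')$ disjoint from $\nh(\ell_i)$, forcing $\nh(\ell')$---not $\nh(\ell_i)$---into a single subtree of $\lambda$. So $\nh(\ell_i)$ lies in the subtree $T$ of $K\setminus\{\lambda\}$ containing $\ell_i$, whose leaves are $\ell_i$ together with the portions of the two $K$-runs flanking $\ell_i$ that fall inside $T$, each of size at most $\delta_K$. Thus $T$ has at most $2\delta_K+1$ leaves and at most $4\delta_K$ nodes.

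For part (b), I would carry out the three-case dichotomy of \cref{lem:oneMarkedComp} using the $C$-nodes $q,s,t$ of the three same-side marked leaves $\ell_{i-1},\ell_i,\ell_{i+1}$. In Cases~1 and~2 the confined neighborhood lives in a single branch hanging off $t$ or $q$, and its leaf count ($1$ marked $+\,2\delta_K$ unmarked) yields at most $4\delta_K\le 10\delta_K$ nodes by the argument of~(a). In Case~3, $\nh(\ell_i)$ is confined to the sub-region $R$ of $K$ delimited by $q$ and $t$ on the spine and excluding their branches. The sub-region $R$ contains $\ell_i$, at most two opposite-side marked leaves (those whose $C$-nodes lie between $q$ and $t$), and unmarked leaves organized into at most $2+3=5$ runs---two on $\ell_i$'s side flanking $\ell_i$ and at most three on the opposite side delimited by its at most two marked leaves. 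Hence $R$ has at most $3+5\delta_K$ leaves and at most $10\delta_K$ nodes.

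The main obstacle will be the Case~3 run count: verifying that, however the opposite-side marked leaves are scattered along the spine between $q$ and $t$, the unmarked leaves of $R$ on the opposite side split into at most three $\delta_K$-bounded runs, while the $\ell_i$-side partial runs each remain bounded by a corresponding full $K$-run.
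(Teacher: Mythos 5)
Your overall strategy is the paper's: bound the number of $\delta_K$-bounded runs of unmarked leaves that a confined neighborhood can meet (two for an $L$-component, five for a \group component) and then convert leaves to nodes via the proper-binary-tree relation $2L-2$; the paper phrases the run count through interval trees (at most seven can share a node with a \group component, and confinement excludes the two outermost), which is only a presentational difference. The soft spots are in the details. First, in part (a) your argument that $\lambda\notin\nh(\ell_i)$ is a non sequitur: from $\lambda\in\nh(\ell_i)$ and disjointness you only conclude that $\nh(\ell')$ is confined to one subtree of $\lambda$, which contradicts nothing. The exclusion is load-bearing, because a neighborhood crossing $\lambda$ into the sibling subtree (while merely avoiding the leaf $\ell'$) could pick up a third run of unmarked leaves and reach roughly $3\delta_K+1$ leaves, i.e.\ more than $4\delta_K$ nodes. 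The paper does not derive this exclusion from disjointness either; it reads it off what ``confined'' means for the neighborhoods produced by \cref{lem:oneMarkedComp}, whose statement explicitly guarantees that they contain no $L$-node and no extreme $C$-node, and you should invoke that rather than attempt a fresh derivation.

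Second, your part (b) is organized around re-running the trichotomy of \cref{lem:oneMarkedComp} for the three majority-side leaves, so it only bounds the neighborhoods of the leaves that the trichotomy singles out; the lemma as stated quantifies over an arbitrary marked leaf of $K$ with a confined neighborhood (it could be a minority-side leaf, or an extreme leaf whose confinement does not arise from your Cases~1--2), and the paper handles this uniformly by case-splitting on whether $\ell_i$ is an extreme leaf $\ell_a$ or $\ell_b$ and using that the extreme $C$-nodes disconnect $\nh(\ell_i)$ from all but at most five interval trees. Third, your own Case~3 arithmetic does not deliver the claimed bound: $5\delta_K+3$ leaves gives $2(5\delta_K+3)-2=10\delta_K+4$ nodes, not $10\delta_K$. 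The paper reaches $10\delta_K$ by counting at most $5\delta_K+1$ leaves, with $\ell_i$ as the only marked leaf inside the neighborhood; so you must either argue that the (up to two) opposite-side marked leaves cannot lie in $\nh(\ell_i)$, or accept a slightly larger constant and propagate it through the step budget of the selection algorithm.
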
	
\begin{proof}
	Let $K$ be an $L$-component whose $L$-node is $s$,
	see \cref{fig:combIntervals1}.
	Since  $\nh(\ell_i)$ is confined to $K$ then  $s \notin
	\nh(\ell_i)$.
	Thus, $s$ disconnects $\nh(\ell_i)$ from the rest of $\T$, making
	$\nh(\ell_i)$ disjoint from any  interval tree, other than $T_{(\ell_{i-1},\ell_i)}$ and $T_{(\ell_i,\ell_{i+1})}$.
	Hence, $\nh(\ell_i)$ contains at most $2\delta_K+1$ leaves, and since 
	it is a proper binary tree, it can have at most $4\delta_K$ nodes in total. 

	Suppose $K$ is a \group component.
        Since $K$ contains exacly five marked leaves, there can be at most
        seven interval trees that may share a node with $K$.
	Let $a$ and $b$ be the two
	extreme $C$-nodes of $K$ and let
	$\ell_a$ and $\ell_b$ be their corresponding 
	marked leaves, which labeled $a$ and $b$ as $C$-nodes. 
	Let $\ell_a^*$ (resp. $\ell_b^*)$  be the neighboring marked
	leaf of $\ell_a$ (resp. $\ell_b)$ 
	in the topological ordering of the marked leaves, which does not belong to $K$.
	Refer to \cref{fig:combIntervals2} and \cref{fig:combIntervals3}.
	Neighborhood $\nh(\ell_i)$ is confined to $K$,
	thus,  $a,b \notin \nh(\ell_i)$.
	If $\ell_i= \ell_a$ (resp. $\ell_i= \ell_b$)
	the $C$-node $a$ (resp. $b$), disconnects $\nh(\ell_i)$ from the
	rest of $\T$.
	Thus, $\nh(\ell_i)$ 
	has a node 
	in common with 
	only two interval trees, 
	$T_{(\ell_{i-1},\ell_i)}$ and $T_{(\ell_i,\ell_{i+1})}$,
	see \cref{fig:combIntervals2}.
	If $\ell_i \notin \{\ell_a,\ell_b\}$, 
	then nodes $a$ and $b$ disconnect
	$\nh(\ell_i)$ from the rest of $\T$, thus,
        $\nh(\ell_i)$
	is disjoint from  both 
	$T_{(\ell_{a},\ell_a^*)}$ and $T_{(\ell_b,\ell_b^*)}$, see \cref{fig:combIntervals3}.
        Then $\nh(\ell_i)$ may have a node in common with at most
        five out of the seven interval trees that could be related to $K$.
        Concluding, $\nh(\ell_i)$ has at most $5\delta_K+1$ leaves, and since 
	it is a proper binary tree, it has at most $10\delta_K$ nodes overall. 	
\end{proof}

For each component $K$ we define a so-called \emph{representative
leaf} and at most two \emph{delimiting nodes}. 
These are used by our algorithm to identify a confined
neighborhood  within the component. 
    
\begin{definition}\label{def:checknodes}
	For a component $K$, we define its  \emph{representative
	leaf} and \emph{delimiting nodes} as follows:
	\begin{enumerate}[label=\alph*), itemsep=-0.5ex, leftmargin=0.5cm]
		\item 
		If $K$ is an $L$-component,
		there is one \emph{delimiting node}, which is
		its  $L$-node.
		The \emph{representative leaf} is
		the first marked leaf of $K$ in the topological
		ordering of leaves.
		In \cref{fig:combCasesLeaf}, $\ell_i$ is the representative leaf and $s$ is the  delimiting node.
		\item 
		If $K$ is a \group component, consider the side of $K$ containing at least three marked leaves.
		The \emph{representative leaf} is the
		second leaf among these three leaves in the topological ordering.
		The \emph{delimiting nodes} are the $C$-nodes
		defined by the other two leaves in the same side.
		In \cref{fig:combCasesGroup}, $\ell_i$ is the representative leaf and $q,t$ are the delimiting nodes.
	\end{enumerate}
\end{definition}

Our algorithm takes as input a marked tree $\T$ and a
parameter $p\in (0,1)$, and returns $\frac{p}{10}m$
marked leaves that
have pairwise disjoint neighborhoods.
A pseudocode description is given in \cref{algorithm}.
The algorithm iterates over all the components of $\T$, and 
selects at most one marked leaf for each component.

For each component $K$, the algorithm first
identifies its representative leaf and delimiting nodes (lines~6,13),
and then 
traverses the neighborhood of the
representative leaf performing a depth-first-search in the
component up to
a predefined number of steps (lines~7,14).
If,  while traversing the neighborhood, a delimiting node is detected
(lines~8,15,17), 
then a marked leaf is selected (lines~9,16,18), following the case
analysis of \cref{lem:oneMarkedComp}.
If the entire neighborhood is traversed within the allowed number
of steps 
without detecting a 
delimiting node (lines~10,19), then 
the representative leaf is selected (lines~11,20).
Otherwise, 
$K$ is abandoned and the algorithm proceeds to the next component.
\begin{algorithm}[t!]
	\SetAlgoVlined \SetAlgoNoEnd 
	\LinesNumbered
	\SetKwInOut{Input}{Input} \SetKwInOut{Output}{Output}
	\SetKwFunction{nh}{$nh$} 
	\SetKw{break}{break}
	\SetKwProg{Trace}{for}{ traverse $\nh(\ell_i)$}{end} 			
	\SetKwBlock{Whiledos}{while}{do}			
	\Input{A marked tree $\T$ with $n=r+m$ leaves and a parameter $p \in (0,1)$.}
	\Output{A set \textit{sol} of marked leaves.}
	\vspace*{1.5mm}
		Obtain the labeling of $\T$\;
		Partition $\T$ into components as indicated in \cref{def:components}\;
		$sol$ $\gets$ $\emptyset$; \tabb 
		$c \gets \left\lceil \displaystyle\frac{r}{m} \displaystyle\right\rceil$; \tabb
		$z \gets \left\lceil \frac{10c}{1-p}\right\rceil -1$\;		
		\For{\textup{each} component $K$ of $\T$}{
			\If{$K$ is an $L$-component}{
				$\ell_i \gets$ representative leaf; \tabb $s \gets$ delimiting node\;
				\Trace{at most $4z$ steps}{
					\If{$s$ is visited}{
						$sol$ $\gets$ $sol \ \cup \ \{\ell_{i+1}\}$; \tabb \break\;
					}
				}
				\If{$\nh(\ell_i)$ is traversed \textbf{and} $s$ is not visited}{
					$sol$ $\gets$ $sol \ \cup \ \{\ell_{i}\}$\;		
				}		        		        
			}    
			\ElseIf{$K$ is a \group component}{
				$\ell_i \gets$ representative leaf, \tabb $q,t \gets$ delimiting nodes\;
				\Trace{at most $10z$ steps}{
					\If{$q$ is visited}{
						$sol$ $\gets$ $sol \ \cup \ \{\ell_{i-1}\}$; \tabb \break\;
					}
					\If{$t$ is visited}{
						$sol$ $\gets$ $sol \ \cup \ \{\ell_{i+1}\}$; \tabb \break\;
					}					
				}
				\If{$\nh(\ell_i)$ is traversed \textbf{and} $q,t$ are not visited}{
					$sol$ $\gets$ $sol \ \cup \ \{\ell_{i}\}$\;		
				}		          			    
			}
		}			
	\Return{$sol$}\; 
	\caption{Selecting leaves with pairwise disjoint neighborhoods.}%
	\label[algo]{algorithm}	
\end{algorithm}
\begin{lemma}
	\cref{algorithm} returns at least $\frac{p}{10}m$ 
	marked leaves with pairwise disjoint neighborhoods such that no tree edge has its endpoints in two different neighborhoods.
	\label{lem:correctness}
\end{lemma}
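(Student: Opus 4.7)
The plan is to reduce the lemma to three claims: (i) every leaf added to \emph{sol} by \cref{algorithm} has a neighborhood confined to the component containing it, (ii) \cref{algorithm} abandons a component $K$ only when $\delta_K > z$, and (iii) at most $\frac{(1-p)m}{10}$ components have $\delta_K > z$. Claim~(i), together with the vertex-disjointness of components (\cref{obs:compDisjoint}) and the fact that confined neighborhoods contain no $L$-node or extreme $C$-node (\cref{lem:oneMarkedComp}), immediately gives pairwise disjointness of the selected neighborhoods and the ``no edge between two neighborhoods'' condition. Combining (ii) and (iii) with the existence of at least $\frac{m}{10}$ components, which follows from the counting in \cref{lem:oneten}/\cref{lem:existence} via $m = 2L + C + 5G \le 10(L+G)$, the number of successful components is at least $\frac{m}{10} - \frac{(1-p)m}{10} = \frac{pm}{10}$.

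For (i) I would follow the case analysis already encoded in the proof of \cref{lem:oneMarkedComp}. In an $L$-component, visiting the delimiting node $s$ during the DFS on $\nh(\ell_i)$ is exactly the situation where $\nh(\ell_i)$ is not confined, which forces $\nh(\ell_{i+1})$ to be confined; completing the DFS without ever seeing $s$ certifies that $\nh(\ell_i)$ itself is confined. The \group component case is the same analysis with the two delimiting $C$-nodes $q, t$ playing the role of $s$, split across the three \textbf{if}-branches on lines~15--20.

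For (ii) I combine the step budgets $4z$ and $10z$ with \cref{lem:intervalTotree}: whenever $\nh(\ell_i)$ is confined, its size is at most $4\delta_K$ (resp.\ $10\delta_K$), so $\delta_K \le z$ guarantees that the DFS exhausts $\nh(\ell_i)$ within the budget and we fall into the \textbf{if}-branches on lines~10--11 (resp.\ 19--20). For the unconfined case I would slightly strengthen \cref{lem:intervalTotree}: even when $\nh(\ell_i)$ extends beyond $K$, the portion of $\nh(\ell_i)$ strictly on $\ell_i$'s side of a delimiting node is a subtree of the ``$\ell_i$-side'' of $K$, and the same interval-tree accounting bounds its size by $4\delta_K$ or $10\delta_K$, so the delimiting node is reached within the budget. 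Contrapositively, every abandoned $K$ must satisfy $\delta_K > z$.

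For (iii) the pigeonhole argument is clean: an abandoned $K$ has $\delta_K > z$, so some interval contributes at least $z+1$ topologically consecutive unmarked leaves to $K$. Because components are pairwise vertex-disjoint, these contributions do not overlap across components, so
\[
(\text{number of abandoned components}) \cdot (z+1) \le r.
\]
Plugging in $z+1 \ge \tfrac{10c}{1-p}$ and $c \ge r/m$ yields the $\tfrac{(1-p)m}{10}$ bound, as required. The step I expect to be the only real obstacle is the unconfined sub-case of (ii): the DFS may waste steps on sibling subtrees before reaching a delimiter, so I need to verify carefully that those sibling subtrees lie entirely inside $K$ and are counted by the same $\delta_K$-based accounting used in \cref{lem:intervalTotree}, thereby letting me reuse the $4z$/$10z$ budget.
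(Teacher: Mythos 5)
Your proposal is correct and follows essentially the same route as the paper's own proof: confinement of each selected neighborhood via the case analysis of \cref{lem:oneMarkedComp}, success on every component with $\delta_K \le z$ via the $4z$/$10z$ budgets and \cref{lem:intervalTotree}, a bound of $\frac{(1-p)}{10}m$ on the failing components (your direct disjointness/pigeonhole count is exactly the computation of \cref{lem:pigeon} applied with $x=z$), and subtraction from the at least $\frac{1}{10}m$ components guaranteed by the counting behind \cref{lem:oneten}. The only difference is that you make explicit the unconfined sub-case of your claim (ii) --- that a delimiting node is reached within the budget because the portion of $\nh(\ell_i)$ on $\ell_i$'s side of the delimiter stays inside $K$ and obeys the same $\delta_K$ accounting --- a point the paper's proof leaves implicit, so this is an extra verification rather than a deviation.
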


\begin{figure}
	\centering
	\includegraphics[width=0.9\textwidth,page=4]{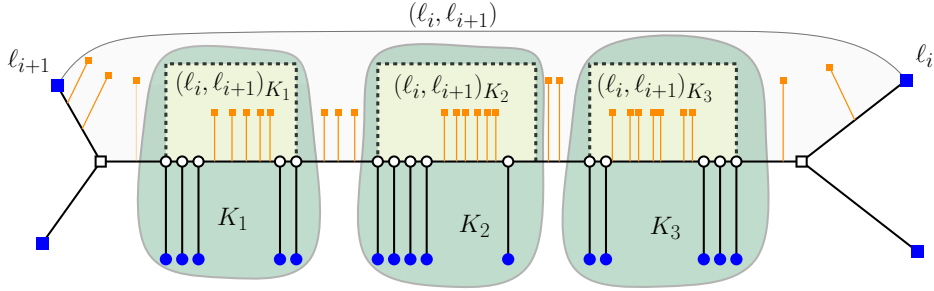}
	\caption{
		An interval $(\ell_i,\ell_{i+1})$ related to three components $K_1$,$K_2$ and $K_3$.
		Interval $(\ell_i,\ell_{i+1})$ is further subdivided into three intervals $(\ell_i,\ell_{i+1})_{K_1}$,$(\ell_i,\ell_{i+1})_{K_2}$ and $(\ell_i,\ell_{i+1})_{K_3}$.
	}
	\label{fig:combLargeInterval} 
\end{figure}    

\begin{proof}	
	Let $K$ be a component.
	The algorithm traverses the neighborhood 
	of the representative 
	leaf $\ell_i$ and takes a decision after 
	at most $4z$, or $10z$, steps. 
	In \cref{lem:intervalTotree}, we proved that 
	if $\nh(\ell_i)$ is confined, 
	$\nh(\ell_i)$ has at most $4\delta_K$, or $10\delta_K$, nodes.
	Hence, if $\delta_K \leq z$, 
	the algorithm will succeed to select a marked leaf from $K$, because
	either $\nh(\ell_i)$ is confined to $K$,
	and thus, the entire $\nh(\ell_i)$ is traversed 
	(lines 10-11,19-20), 
	or else a delimiting node gets visited,
	and thus, the corresponding marked leaf is selected
	(lines 8-9,15-18).
	In all cases, we follow the proof of \cref{lem:oneMarkedComp} and the neighborhood of the selected leaf is confined to $K$.
	Thus the selected leaf is among those counted in
	\cref{lem:existence}.

	If  on the other hand $\delta_k > z$, then the algorithm may fail to identify a marked
	leaf of $K$. We use the pigeonhole \cref{lem:pigeon} to  bound
    the number of these components.
    To this aim, we consider the set $I$ of all intervals
    induced by the marked leaves and the component of
    $\T$.
    For an interval  $(\ell_i,\ell_{i+1})$, which is not disjoint
    from $K$, let
	$(\ell_i,\ell_{i+1})_K := (\ell_i,\ell_{i+1}) \cap K$ denote its
	sub-interval of unmarked leaves that belong to$K$, see an example in \cref{fig:combLargeInterval}.
    Let  $I_z$ be the intervals in $I$ that contain more than $z$ unmarked leaves.
	Then the  algorithm may fail in at most  $|{I_z}|$  components.

    To bound  $|{I_z}|$, we use \cref{lem:pigeon} for   
 	$x = z = \left \lceil\frac{10c}{1-p} \right \rceil -1$.  Then,
	\begin{align}
		|I_z| \ \stackrel{(\ref{eq3})}{\leq} \ \frac{cm}{z+1} \ = \  
		\frac{cm}{\left \lceil\frac{10c}{1-p} \right \rceil -1+1} \ \leq \ 
		\displaystyle \frac{cm}{\frac{10c}{1-p}} \ = \
		\frac{1-p}{10}m 
	\label{eq4}
	\end{align}

	Thus, the algorithm may fail for at most $\frac{1-p}{10}m $ components.
	By \cref{lem:oneten}, there exist at least $\frac{1}{10}m$ components
	in $\T$,
	thus, the algorithm will succeed in selecting a marked leaf from
	at least 
	\begin{align}
		\frac{1}{10}m - |{I}_z| \ \stackrel{(\ref{eq4})}{\geq}  \ \frac{1}{10}m - \frac{1-p}{10}m  \ = \ m\frac{p}{10}
	\end{align} 
	components, concluding the proof.
\end{proof}

\begin{lemma}
	\cref{algorithm} has time complexity $O(\frac{1}{1-p}n)$.
	\label{lem:runtime}
\end{lemma}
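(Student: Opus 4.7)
The plan is to split the running time into $O(n)$ preprocessing plus a per-component cost, and then combine the definition of $z$ with the identity $mc \le n$ to obtain the stated bound.

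First, I would argue that the setup on lines~1--3 and line~5 takes $O(n)$ time. Obtaining the labeling of $\T$ amounts to building $\T_u$ and $\T_u^*$ and classifying each of their nodes as $L$-, $C$-, or $J$-node; all of this can be done in a constant number of traversals of $\T$. The same passes can partition $\T$ into components following \cref{def:components} and, as a byproduct, attach to each component its representative leaf and one or two delimiting nodes (\cref{def:checknodes}), so that these become accessible in $O(1)$ time inside the main loop. The parameters $c$ and $z$ are computed in $O(1)$ time from $n$, $m$, $r$, and $p$.

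Next, I would analyze the main \textbf{for} loop. For each component $K$, the algorithm performs a depth-first search of $\nh(\ell_i)$ truncated after at most $4z$ steps if $K$ is an $L$-component and at most $10z$ steps if $K$ is a \group component; each step costs $O(1)$, since it only inspects the next node, checks whether it coincides with a delimiting node, and possibly updates \textit{sol}. Hence the work on $K$ is $O(z)$. By \cref{obs:compDisjoint}, the components are pairwise vertex-disjoint and each contains at least two marked leaves, so the total number of components is at most $m/2$. Therefore the work of the main loop is $O(mz)$.

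Finally, I would substitute the definition of $z$. Using $z \le 10c/(1-p)$ together with
\[
mc \;=\; m\left\lceil \frac{r}{m} \right\rceil \;\le\; m\left(\frac{r}{m}+1\right) \;=\; r+m \;=\; n,
\]
I obtain $mz \le 10n/(1-p)$, so adding the $O(n)$ preprocessing cost gives the claimed total of $O(n/(1-p))$.

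There is no substantive mathematical obstacle; the care needed is merely in the bookkeeping, namely confirming that the labeling, partitioning, and per-component pointer setup on lines~1--2 and on the assignments in lines~6 and~13 can all be realized within $O(n)$, so that the $z$-truncation actually caps the per-component cost at $O(z)$ irrespective of the true size of $\nh(\ell_i)$.
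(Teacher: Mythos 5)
Your proposal is correct and follows essentially the same route as the paper: $O(n)$ for labeling and partitioning, then $O(z)$ truncated work per component over $O(m)$ components, with $mz = O\bigl(\frac{cm}{1-p}\bigr) = O\bigl(\frac{n}{1-p}\bigr)$. The only (cosmetic) difference is that you bound $cm \le m\bigl(\frac{r}{m}+1\bigr) = n$ directly, whereas the paper argues the same fact via the case distinction $m = \Theta(n)$ versus $m = o(n)$; your version is if anything slightly cleaner.
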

\begin{proof}
	Labeling and partitioning the  tree $\T$ into components can
	be  done in $\Theta(n)$ time.
	Then, for each component the algorithm traverses a neighborhood
	performing at most $10z =\Theta(\frac{c}{1-p})$ steps.
	There are $\Theta(m)$ components,
	so we have $O(\frac{c}{1-p} \cdot m)$ time complexity.         
	Recall that $c = \left\lceil \frac{r}{m} \right\rceil$. 
	If $m =\Theta(n)$, then $c=\Theta(1)$, so $cm = \Theta(n)$.
	Else if $m =o(n)$, then $cm = \Theta(r) =\Theta(n)$.
	In all cases, the time complexity of the algorithm is  $O(\frac{1}{1-p}n)$.
\end{proof}

By combining \cref{lem:existence,lem:correctness,lem:runtime}
we establish (and re-state)
\cref{thrm:main}. 
\setcounter{theorem}{1}
\begin{theorem}
	Let $\T$ be a marked tree of $n$ total leaves and $m$ marked leaves.
	Then there exist at least $\frac{1}{10}m$ leaves in $\T$ with pairwise disjoint neighborhoods such that no tree edge has its endpoints in two different neighborhoods.        
	We can select at least a fraction $p$ of these $\frac{1}{10}m$ 
	marked leaves in time $O(\frac{1}{1-p}n)$, for any $p \in (0,1)$.
\label{th:gcl}
\end{theorem}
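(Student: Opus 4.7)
The plan is to obtain the theorem as a direct consequence of the three technical results already proved: \cref{lem:existence}, \cref{lem:correctness}, and \cref{lem:runtime}. Since each of these lemmas isolates one of the three claims appearing in the theorem, the proof should essentially be a one-line invocation. However, I would still take a moment to verify that the statements line up cleanly, because \cref{thrm:main} packages together an existence claim, a geometric/edge-constraint claim, and an algorithmic claim with a trade-off parameter $p$.

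First, I would dispatch the existence portion. The bound of at least $\frac{1}{10}m$ marked leaves with pairwise disjoint neighborhoods is exactly the statement of \cref{lem:existence}. The strengthening that no tree edge has its endpoints in two different selected neighborhoods also comes from \cref{lem:existence}, which in turn inherits this property from \cref{lem:oneMarkedComp}: each chosen neighborhood is confined to its component and avoids the $L$-nodes and extreme $C$-nodes that sit at the boundary of the components. Thus no cross-component edge can have both endpoints inside such neighborhoods.

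Next, I would handle the algorithmic portion. \cref{algorithm} takes as input the tree $\T$ and the parameter $p \in (0,1)$; by \cref{lem:correctness} it returns at least $\frac{p}{10}m$ marked leaves whose neighborhoods are pairwise disjoint and satisfy the edge-endpoint condition, and by \cref{lem:runtime} it does so in $O(\tfrac{1}{1-p}\,n)$ time. Combined, these two lemmas give precisely the quantitative selection guarantee stated in the theorem.

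The proof itself will therefore consist of two sentences: one citing \cref{lem:existence} for existence (with the edge condition), and one combining \cref{lem:correctness} with \cref{lem:runtime} for the algorithm. I do not anticipate a real obstacle here, since all the work has been absorbed into the supporting lemmas; the only subtlety worth flagging is verifying that the parameter $z = \lceil \tfrac{10c}{1-p}\rceil - 1$ chosen in \cref{algorithm} is exactly the quantity against which \cref{lem:pigeon} is applied inside the proof of \cref{lem:correctness}, so that the fraction $p$ in the theorem is genuinely the same $p$ used by the algorithm. Once this alignment is noted, the theorem follows immediately.
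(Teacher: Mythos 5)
Your proposal is correct and matches the paper exactly: the paper establishes \cref{th:gcl} by combining \cref{lem:existence} (existence of $\frac{1}{10}m$ leaves with the edge-endpoint condition) with \cref{lem:correctness} and \cref{lem:runtime} (the selection guarantee and the $O(\frac{1}{1-p}n)$ running time of \cref{algorithm}). Your extra check that the parameter $z=\lceil\frac{10c}{1-p}\rceil-1$ aligns with the application of \cref{lem:pigeon} is a reasonable sanity check, but it is already handled inside the proof of \cref{lem:correctness}, so no additional work is needed.
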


If the parameter $p \in (0,1)$ is a constant, then the algorithm returns 
a constant fraction of the marked leaves and 
the time complexity of the algorithm is $O(n)$.

	\bibliographystyle{abbrv}
	\bibliography{combLemma}

\end{document}